\documentclass[letterpaper, 10 pt, conference]{ieeeconf}  % Comment this line out if you need a4paper

\IEEEoverridecommandlockouts                              % This command is only needed if 
\usepackage{graphics} % for pdf, bitmapped graphics files
\usepackage{amsmath} % assumes amsmath package installed
\usepackage{amssymb}  % assumes amsmath package installed

\newtheorem{theorem}{Theorem}
\newtheorem{lemma}{Lemma}

\newtheorem{remark}{Remark}

\newcommand{\tighten}{\text{tighten}}

\usepackage{cite}
\usepackage{subcaption}
\usepackage{optidef}
\usepackage{color,comment}
\usepackage{hhline}
\usepackage{makecell}
\usepackage{bm}
\title{\LARGE \bf
Formal Synthesis of Robust Koopman-Model Predictive Control: \\  A Case Study in AC-DC Power Conversion
}
\author{%
Shun Hirose, Shiu Mochiyama, and Yoshihiko Susuki
\thanks{This work was supported in part by JSPS KAKENHI (Grant No. 23H01434 and No. 26K00968) and JSPS Bilateral Collaborations (Grant No. JPJSBP120242202). }
\thanks{They are with Department of Electrical, Electronic and Digital Science and Engineering, Kyoto University, Katsura, Nishikyo-ku, Kyoto 615-8510 Japan {\tt hirose.shun.65x @st.kyoto-u.ac.jp}, {\tt \{mochiyama.shiu, susuki.yoshihiko\}.5c@kyoto-u.ac.jp}}
}

\begin{document}

\maketitle
\thispagestyle{empty}
\pagestyle{empty}

%%%%%%%%%%%%%%%%%%%%%%%%%%%%%%%%%%%%%%%%%%%%%%%%%%%%%%%%%%%%%%%%%%%%%%%%%%%%%%%%

\begin{abstract}

This letter proposes a formal synthesis of Robust Koopman-Model Predictive Control (RK-MPC), a novel data-driven approach to formal synthesis of systems with nonlinear dynamics. 
We formulate a novel optimization problem for RK-MPC by incorporating specifications described by  Signal Temporal Logic and prove its closed-loop performance. 
Effectiveness of the proposed RK-MPC is evaluated by applying it to the reliable design of an AC-DC power converter. 

\end{abstract}

%%%%%%%%%%%%%%%%%%%%%%%%%%%%%%%%%%%%%%%%%%%%%%%%%%%%%%%%%%%%%%%%%%%%%%%%%%%%%%%%

\section{INTRODUCTION}

Koopman-Model Predictive Control (K-MPC) is a data-driven optimal control method for nonlinear systems \cite{KORDA2018149}. 
The basic idea of K-MPC is to utilize a \textit{linear} or \textit{bilinear} model of a nonlinear plant for predicting the state or output of the nonlinear plant and to generate an optimal input. 
We refer to the linear/bilinear model, guided by the Koopman operator theory \cite{koopman-introduction-springer}, as the \emph{Koopman model}.  
K-MPC has been successfully applied in engineering domains, including robotics \cite{Koopman-operators-in-robot-lerning}, chemical processes \cite{koopman-chemistry}, and energy systems \cite{developing-koopmanism-in-power-and-energy-systems}: see references therein. 
In \cite{hirose-isie2026}, the authors applied K-MPC to a real AC-DC converter motivated by More Electric Aircraft (MEA) \cite{onboard} and experimentally showed that the K-MPC controller performs better than other existing controllers. 

The so-called Robust K-MPC (RK-MPC), which is a combination of K-MPC with robust control techniques, has been reported; see, e.g., \cite{RK-MPC-ZHANG,RK-MPC-Mamakoukas}. 
The Koopman model involves modeling error arising from the linear/bilinear approximation of a nonlinear plant and from its estimation from finite data. 
This modeling error causes a K-MPC controller to violate constraints prescribed in the optimization problem; see, e.g., \cite{RK-MPC-ZHANG,RK-MPC-Mamakoukas}. 
The RK-MPC controller considers the modeling error of the Koopman model explicitly and hence 
guarantees that the closed-loop response satisfies the prescribed constraints. 
Notably, the authors of \cite{SafEDMD} propose a data-driven modeling framework, called \emph{SafEDMD}, which considers the Koopman model with an error bound proportional to the norm of the state and control input. 
This robustness guarantee makes K-MPC more practical in terms of reliability; see \cite{STRASSER2026101035} for details. 
This reliability has been required by mission-critical applications, including MEA as mentioned before, where the aircraft's electrical standards, MIL-STD-704F \cite{MIL-STD-704F}, specify that, for example, a system's voltage should be maintained above a threshold even during load fluctuation.

In this letter, we propose \emph{formal synthesis of the RK-MPC} and apply it to an AC-DC converter. 
Formal synthesis aims to design systems from temporal logic specifications \cite{FormalMethodsforControl}. 
Temporal logics, including Signal Temporal Logic (STL) \cite{MonitoringTemporalProperties,RobustSatisfactionofTemporal}, are able to describe complex specifications of temporal properties and have been widely used in controller synthesis; see, e.g., \cite{raman2014,sadraddini2015,Miyashita2022}. 
The contributions of this letter are threefold: 
\begin{itemize}
\item[1)] We propose a novel formulation in the RK-MPC that realizes data-driven formal synthesis of controllers for nonlinear systems. 
Following \cite{raman2014}, we encode STL specifications as inequality constraints with max/min functions. 
Then, we formulate an optimization problem with both the inequality constraints from STL and the Koopman model. 
\item[2)] We prove in Theorem~\ref{thm:robust-satisfaction} that the proposed formulation guarantees robust satisfaction of the STL specifications. 
To ensure robustness to the controller, we tighten the inequality constraints associated with the STL specifications by incorporating a modeling error assumed in the SafEDMD framework. 
This implies that the formulation guarantees a robust control strategy that satisfies the STL specifications despite the modeling error. 
\item[3)] We present a case study of reliable design for AC-DC power conversion. 
We set the STL specifications for the AC current and DC voltage of an AC-DC converter and then formulate the optimization problem. 
Through numerical simulation, we show that the AC-DC converter controlled by the proposed K-MPC controller meets the specification despite modeling error. 
\end{itemize}
Note that the authors of \cite{raman2014} propose formal synthesis of MPC for linear plants by using STL, and the authors of \cite{sadraddini2015} make it robust to disturbances. 
The authors of \cite{Miyashita2022} realize formal synthesis of the K-MPC by using the Koopman model and STL. 
In this letter, we formally synthesize the K-MPC to be robust to modeling error and guarantee robust satisfaction of STL specifications, which, to the best of the authors' knowledge, is the novelty of this letter.

\paragraph*{Notation}
The $1$-norm of a vector $\bm{x}$ is denoted by $\|\bm{x}\|$ and the induced $1$-norm of a matrix $\bm{A}$ is denoted by $\|\bm{A}\|$. 
The notation $\bm{x}[k]$ represents the value of $\bm{x}$ at discrete-time step $k$, $\hat{\bm{x}}[k+\ell|k]$ the $\ell$-step-ahead prediction of $\bm{x}$ evaluated at discrete time $k$, and $\bm{x}[k_1:k_2]$ the sequence $(\bm{x}[k_1], \bm{x}[k_1+1], \dots, \bm{x}[k_2])$.
The symbols $x_i$, $\hat{x}_i$, $u_i$, and $y_i$ represent the $i$th element of $\bm{x}$, $\hat{\bm{x}}$, $\bm{u}$, and $\bm{y}$. 
The symbol $\top$ represents the transpose operation of vectors. 
The symbol $\mathrm{j}=\sqrt{-1}$ stands for the imaginary unit.
The symbols $\Im(\cdot)$ and $\Re(\cdot)$ stand for the imaginary and real part of $(\cdot)$.
%%%%%%%%%%%%%%%%%%%%%%%%%%%%%%%%%%%%%%%%%%%%%%%%%%%%%%%%%%%%
%%%%%%%%%%%%%%%%%%%%%%%%%%%%%%%%%%%%%%%%%%%%%%%%%%%%%%%%%%%%

\section{PRELIMINARIES}
\label{Preliminaries}

\subsection{Koopman Model in SafEDMD \cite{SafEDMD}} \label{Koopman-model-in-SafEDMD}
We present the Koopman model identified by SafEDMD for RK-MPC. 
The Koopman model is a linear/bilinear system on a high-dimensional state space that approximates the original nonlinear dynamics. 
Consider a nonlinear plant represented as a discrete-time controlled system, given by
\begin{equation}
    \label{eq:discrete-nonlinear-system}
    \bm{x}[k+1]=\bm{T}(\bm{x}[k],\bm{u}[k]),
\end{equation}
where $\bm{x}\in\mathbb{R}^{n}$ and $\bm{u}\in\mathbb{R}^{m}$ (integers $n,m$) represent the state and the control input at discrete time $k$ and $\bm{T}:\mathbb{R}^{n}\times\mathbb{R}^{m}\to\mathbb{R}^{n}$ represents a nonlinear map. 
We suppose that the origin is a fixed point under $\bm{u}=\bm{0}$.
The Koopman model identified by SafEDMD for the nonlinear plant \eqref{eq:discrete-nonlinear-system} is represented as a bilinear plant with the same input $\bm{u}$, given by
\begin{equation}
    \label{eq:bilinear-koopman-model}
    \left.
    \begin{aligned}
    \bm{z}[\ell+1] &= \bm{A}\bm{z}[\ell] + \bm{B}_{0}\bm{u}[\ell] + \sum^{m}_{i=1}u_{i}[\ell]\bm{B}_{i}\bm{z}[\ell] \\
    \hat{\bm{x}}[\ell+1|k] &= \begin{bmatrix}
        \bm{I} & \bm{0}
    \end{bmatrix} \bm{z}[\ell+1] \quad(\ell=k,k+1,\ldots)
    \end{aligned}
    \right\},
\end{equation}
where $\bm{z}[k] := [\bm{x}[k]^\top, f_{n+1}(\bm{x}[k]), \ldots, f_N(\bm{x}[k])]^\top$, via user-defined functions $f_{n+1},\ldots,f_N$ called \textit{observables}, is a high-dimensional state, $\bm{I}$ the $n\times n$ identity matrix, $\bm{0}$ the $n\times(N-n)$ zero matrix, and $\hat{\bm{x}}$ the estimate of the state $\bm{x}$. 
The finite-dimensional matrices $\bm{A}\in\mathbb{R}^{N\times N}$, $\bm{B}_{0}\in\mathbb{R}^{N\times m}$, and $\bm{B}_{i}\in\mathbb{R}^{N\times N}~(i=1,\ldots,m)$ are estimated from i.i.d samples of the original nonlinear system \eqref{eq:discrete-nonlinear-system}; see \cite{SafEDMD} for the detailed description of the estimation. 
The Koopman model \eqref{eq:bilinear-koopman-model} is used for RK-MPC to predict the state trajectory $\hat{\bm{x}}[k+1:k+N_{\rm P}|k]$ from the current state $\bm{x}[k]$ at time $k$. 
The bilinear model induces a non-convex optimization problem in K-MPC, and therefore we will fix $\bm{z}[\ell]$ of the bilinear term to the constant $\bm{z}[k]$ over the entire prediction horizon, as in \cite{Bruder2021}. 

Building on \cite{SafEDMD}, we derive a bound on the Koopman model's error, which plays a fundamental role in the RK-MPC. 
Given the initial state $\bm{x}[k]$ for the prediction, the modeling error is defined as
\begin{equation}
    \bm{e}[k+L|k]:=\hat{\bm{x}}[k+L|k]-\bm{x}[k+L] ~(L=1,2,\ldots).
\end{equation}
The following theorem provides a bound on $\bm{e}[k+L|k]$. 
\begin{theorem} \label{thm:error-bound}
    Consider the nonlinear plant \eqref{eq:discrete-nonlinear-system} and the associated Koopman model \eqref{eq:bilinear-koopman-model}. 
    Suppose that the input $\bm{u}$ is constrained as $|u_{i}|\leq u_{i,\max} ~ (i=1,\ldots,m)$, where $u_{i,\max}$ is a constant. % in an MPC problem. 
    For the prediction of the state $\bm{x}$ from discrete time $k$, the $L$-step prediction error $\bm{e}[k+L|k]$ is bounded with probability $1-\delta$ for $\delta\in(0,1)$ by real-valued constants $c_{\bm{z}}$ and $c_{\bm{u}}$ as follows: 
    \begin{equation}
        \label{eq:multi-step-error}
        \|\bm{e}[k+L|k]\| \leq e_{\max}[k+L|k],
    \end{equation}
    with % as follows: で文章が終わっているのに，ここでwhereは適切ではないのでは？
    $e_{\max}[k+L|k]:= \{ c_{\bm{u}}L\alpha + L\|\bm{z}[k]\|\beta + (c_{\bm{z}}+\beta) (\|\bm{z}[k]\| + L\|\bm{B}_{0}\| \alpha + L \|\bm{z}[k]\| \beta) \sum_{\ell=0}^{L} C(L,\ell+1) \|\bm{A}-\bm{I}\|^{\ell} \} (1 + c_{\bm{z}} + \|\bm{A}-\bm{I}\| + \beta)^L$, 
    $\alpha:= \sum_{i=1}^{m}u_{i,\max}$, and 
    $\beta:= \sum_{i=1}^{m} u_{i,\max} \|\bm{B}_{i}\|$. 
    Note that $C(L, \ell+1)$ represents the combination (binomial coefficient) defined as $C(L, \ell+1) := L!/\{(\ell+1)!(L-\ell-1)!\}$. 
\end{theorem}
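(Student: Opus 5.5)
We will first state the one-step guarantee from SafEDMD \cite{SafEDMD} that the theorem rests on. With probability $1-\delta$, the lifted true dynamics satisfy
\[
\bm{z}_{\rm true}[\ell+1]=\bm{A}\bm{z}_{\rm true}[\ell]+\bm{B}_0\bm{u}[\ell]+\sum_i u_i[\ell]\bm{B}_i\bm{z}_{\rm true}[\ell]+\bm{r}[\ell],
\]
where $\bm{z}_{\rm true}[\ell]$ is the observable vector evaluated along the true trajectory. The residual obeys the proportional bound $\|\bm{r}[\ell]\|\le c_{\bm{z}}\|\bm{z}_{\rm true}[\ell]\|+c_{\bm{u}}\|\bm{u}[\ell]\|$. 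We take this as the source of the constants $c_{\bm{z}}$ and $c_{\bm{u}}$ and of the probability level. Because $\|\bm{e}[k+L|k]\|$ is the norm of the first $n$ components of $\hat{\bm{z}}-\bm{z}_{\rm true}$ and $\|[\bm{I}~\bm{0}]\|\le 1$, it suffices to bound the lifted error $\bm{\varepsilon}[\ell]:=\hat{\bm{z}}[\ell]-\bm{z}_{\rm true}[\ell]$ with $\bm{\varepsilon}[k]=\bm{0}$. Throughout, the input bound gives $\|\bm{u}[\ell]\|\le\alpha$ and $\|\sum_i u_i\bm{B}_i\|\le\beta$.

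Next we write the error recursion. The prediction freezes the bilinear term at $\bm{z}[k]$, so
\[
\bm{\varepsilon}[\ell+1]=\bm{A}\bm{\varepsilon}[\ell]+\textstyle\sum_i u_i\bm{B}_i(\bm{z}[k]-\bm{z}_{\rm true}[\ell])-\bm{r}[\ell].
\]
We will rewrite $\bm{A}=\bm{I}+(\bm{A}-\bm{I})$, which is what produces the $\|\bm{A}-\bm{I}\|$ terms. We also write $\bm{z}[k]-\bm{z}_{\rm true}[\ell]=-(\bm{z}_{\rm true}[\ell]-\bm{z}_{\rm true}[k])$. Bounding the residual gives
\[
\|\bm{\varepsilon}[\ell+1]\|\le(1+\|\bm{A}-\bm{I}\|)\|\bm{\varepsilon}[\ell]\|+(\beta+c_{\bm{z}})\|\bm{z}_{\rm true}[\ell]\|+c_{\bm{u}}\alpha+\beta\|\bm{z}[k]\|.
\]
Summed over $L$ steps, the last two terms give the contributions $c_{\bm{u}}L\alpha$ and $L\|\bm{z}[k]\|\beta$ in $e_{\max}$.

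The main work, and the expected obstacle, is bounding $\|\bm{z}_{\rm true}[\ell]\|$ without knowing the true trajectory. For this we use the prediction itself and the triangle inequality, $\|\bm{z}_{\rm true}[\ell]\|\le\|\hat{\bm{z}}[\ell]\|+\|\bm{\varepsilon}[\ell]\|$. Since the frozen model is linear, we expand
\[
\hat{\bm{z}}[k+j]=(\bm{I}+(\bm{A}-\bm{I}))^j\bm{z}[k]+\dots .
\]
Expanding by the binomial theorem and summing the increments $\hat{\bm{z}}[k+j]-\bm{z}[k]$ over $j\le L$ should yield $\sum_\ell C(L,\ell+1)\|\bm{A}-\bm{I}\|^\ell$ multiplying $(\|\bm{z}[k]\|+L\|\bm{B}_0\|\alpha+L\|\bm{z}[k]\|\beta)$. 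Here the input and bilinear contributions are crudely bounded by their $L$-step sums, and $\sum_{j=1}^{L}C(j,\ell)=C(L+1,\ell+1)$-type identities are used to collapse the double sums. This term carries the factor $(c_{\bm{z}}+\beta)$, which matches the statement.

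Finally, substituting $\|\bm{z}_{\rm true}\|\le\|\hat{\bm{z}}\|+\|\bm{\varepsilon}\|$ back into the recursion gives a linear inequality
\[
\|\bm{\varepsilon}[\ell+1]\|\le(1+c_{\bm{z}}+\|\bm{A}-\bm{I}\|+\beta)\|\bm{\varepsilon}[\ell]\|+(\text{forcing}).
\]
A discrete Grönwall argument applies: unroll the recursion and bound each power of the growth factor by the $L$-th power. This gives the forcing sum times $(1+c_{\bm{z}}+\|\bm{A}-\bm{I}\|+\beta)^L$, which is the claimed form of $e_{\max}[k+L|k]$. The probability $1-\delta$ is inherited because a single SafEDMD event covers all steps, the residual bound being uniform in the state. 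The delicate part is the combinatorial bookkeeping in the middle step. Since the claimed bound is fairly loose, we will bound generously (for example, $C(j,\cdot)\le C(L,\cdot)$) rather than seek exact identities.
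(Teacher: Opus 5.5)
Your proposal is correct and follows the paper's route: the paper's proof simply invokes the SafEDMD one-step proportional bound (Corollary~3.2 of \cite{SafEDMD}) and then a discrete Gr\"onwall argument, and your version supplies the bookkeeping it omits. That bookkeeping closes exactly: unrolling from $\bm{\varepsilon}[k]=\bm{0}$, bounding $\|\hat{\bm{z}}[k+j]\|$ by direct expansion, and using $\sum_{j=0}^{L-1}(1+\|\bm{A}-\bm{I}\|)^j=\sum_{\ell=0}^{L}C(L,\ell+1)\|\bm{A}-\bm{I}\|^{\ell}$ reproduces every term of $e_{\max}[k+L|k]$, and propagating the lifted residual, as you do rather than only the projected one-step error on $\bm{x}$, is what the argument actually needs (like the paper, you implicitly assume the trajectory stays in the compact set where the SafEDMD bound holds uniformly).
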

\begin{proof}
    In the SafEDMD framework, the one-step error is bounded by 
    $\|\bm{e}[k+1|k]\| \leq  c_{\bm{z}} \|\bm{z}[k]\| + c_{\bm{u}}\|\bm{u}[k]\|$ from Corollary~3.2 of \cite{SafEDMD}. 
    We apply the discrete Gronwall inequality \cite{discrete-gronwall-inequality} to the one-step error bound and immediately obtain \eqref{eq:multi-step-error}.
\end{proof}
\begin{comment}
\begin{remark}
    The constants $c_{\bm{z}}$ and $c_{\bm{u}}$ are tuning parameters for estimating modeling error \cite{SafEDMD} and thus synthesizing an RK-MPC controller below. 
    Larger values of $c_{\bm{z}}$ and $c_{\bm{u}}$ {\color{red}can robustify the RK-MPC controller strongly}, which will be discussed in Sec.~\ref{numerical-simulation}.  
\end{remark}
\end{comment}
%%%%%%%%%%%%%%%%%%%%%%%%%%%%%%%%%%%%%%%%%%%%%%%%%%%%%%%%%%%%

\subsection{Signal Temporal Logic} \label{STL}

STL is a logic to formally specify temporal properties of time-series signals \cite{MonitoringTemporalProperties,RobustSatisfactionofTemporal}. 
We treat the system state trajectory \eqref{eq:discrete-nonlinear-system} as a time-series signal and describe temporal properties of the system's trajectory using STL. 
First, we define the predicate, which is the basic component of STL formulae. 
A predicate $\pi$ for the discrete-time system \eqref{eq:discrete-nonlinear-system} is defined as an inequality of the state $\bm{x}[k]$ and input $\bm{u}[k]$ at time $k$, given by 
\begin{equation*}
    \pi := \left\{ \mu(\bm{x}[k],\bm{u}[k]) := \bm{G}_{\bm{x}} \bm{x}[k] + \bm{G}_{\bm{u}} \bm{u}[k] + H\geq 0 \right\},
\end{equation*}
where $\mu: \mathbb{R}^n \times \mathbb{R}^m \to \mathbb{R}$ is the so-called \textit{predicate function} \cite{RobustSatisfactionofTemporal}, $\bm{G}_{\bm{x}}\in\mathbb{R}^{1\times n}$ and $\bm{G}_{\bm{u}}\in\mathbb{R}^{1\times m}$ user-defined row vectors, and $H\in\mathbb{R}$ a user-defined scalar. 
A predicate $\pi$ is satisfied by $\bm{x}[k]$ and $\bm{u}[k]$ at time $k$ if and only if the corresponding predicate function is non-negative, i.e., $\mu(\bm{x}[k],\bm{u}[k])\geq 0$. 
Second, we introduce an STL formula by combining predicates with the following grammar: $\neg$ (negation), $\land$ (disjunction), $\lor$ (conjunction), $\mathcal{U}_{[a,b]}$ (timed until), $\mathbf{F}_{[a,b]}$ (timed eventually), and $\mathbf{G}_{[a,b]}$ (timed always). 
%Given a trajectory $(\bm{x}[k+a:k+b],\bm{u}[k+a:k+b])$, the formula $\varphi_1 \,\mathcal{U}_{[a,b]}\, \varphi_2$ holds if and only if $\varphi_{1}$ holds at every time step before $\varphi_{2}$ holds at some time step between $k+a$ and $k+b$. 
See \cite{MonitoringTemporalProperties} for the detailed description of the grammar. 
From \cite{sadraddini2015}, we introduce the horizon length $h^\varphi$ for the STL formula $\varphi$. 
At time $k$, the STL formula $\varphi$ is evaluated on 
the time-sequence of states and inputs $(\bm{x}[k:k+h^\varphi],\bm{u}[k:k+h^\varphi])$.

The validity of an STL formulae is evaluated with an inequality of composition of max and min functions with predicate functions. 
To do so, we introduce the \textit{robustness function} \cite{RobustSatisfactionofTemporal}, which is a function of the predicate functions, denoted as $\rho^\varphi$, for a given STL formula $\varphi$. 
Suppose that $\varphi$ is composed of $p$ predicates given as $\pi_1=\{\mu_1(\bm{x}[k],\bm{u}[k])\geq0\},\ldots,\pi_p=\{\mu_p(\bm{x}[k],\bm{u}[k])\geq0\}$.  
Here, the robustness function $\rho^\varphi$ depends on a set of predicate functions, defined as $M(\bm{x}[k:k+h^\varphi],\bm{u}[k:k+h^\varphi]):=\{\mu_1(\bm{x}[\ell],\bm{u}[\ell]),\ldots,\mu_p(\bm{x}[\ell],\bm{u}[\ell])\}_{\ell=k}^{k+h^\varphi}$, and also depends on time $k$. 
The aforementioned STL grammar excluding negation can be expressed by a max/min function or its combination (see \cite{RobustSatisfactionofTemporal,raman2014}). 
In addition, negation is distributed to predicates, and negation of a predicate is expressed by multiplying the associated predicate function by $-1$ (see \cite{sadraddini2015}). 
%The robustness function of the predicate $\pi$ corresponds to the predicate function $\mu(\bm{x}[k],\bm{u}[k])$. 
%The robustness function of the \textit{and} operator $\varphi=\varphi_1 \land \varphi_2$ and the \textit{until} operator $\varphi=\varphi_1 \mathcal{U}_{[a,b]} \varphi_2$, which we will use in Sec.~\ref{application}, is defined as $\rho^{\varphi}(M(\cdot,\cdot),k) := \min ( \rho^{\varphi_1}(M(\cdot,\cdot),k), \, \rho^{\varphi_2}(M(\cdot,\cdot),k))$ and $\rho^{\varphi}(M(\bm{x}[k+a:k+b],\bm{u}[k+a:k+b]),k) := \max_{k' \in [k+a, k+b]} \min ( \rho^{\varphi_2}(M(\bm{x}[k+a:k+b],\bm{u}[k+a:k+b]),k'), \min_{k'' \in [k, k'-1]} \rho^{\varphi_1}(M(\bm{x}[k+a:k+b],\bm{u}[k+a:k+b]),k'') )$. 
A formula $\varphi$ is satisfied by $\bm{x}[k:k+h^\varphi]$ and $\bm{u}[k:k+h^\varphi]$ at time $k$ if the corresponding robustness function is non-negative, that is, $\rho^\varphi(M(\bm{x}[k:k+h^\varphi],\bm{u}[k:k+h^\varphi]),k) \geq 0$.

%%%%%%%%%%%%%%%%%%%%%%%%%%%%%%%%%%%%%%%%%%%%%%%%%%%%%%%%%%%%%%%%%%%%%%%%%%%%%%%%%%%%%%%%%%%%%%%%%%%%%%%%%%%%%%%%

% \begin{figure}[t]
%     \centering
%     \includegraphics[width=1.0\linewidth]{blockdiagram.pdf}
%     \caption{Conceptual description of formal synthesis of Robust Koopman-MPC. }
%     \label{fig:block-diagram}
% \end{figure}

\section{MAIN RESULT} \label{main-result}

We present the main result of this letter: formal synthesis of the RK-MPC using STL. 
The basic idea is to incorporate inequality constraints for a robustness function, corresponding to a given STL specification, into the optimization problem. 
The K-MPC controller proposed in this letter solves at each time step $k$ of the closed-loop operation the following optimization problem:
\begin{mini!}|s|[3]
    {\substack{\bm{u}^{*}[k:k+h_{\rm P}-1]}}
    {J\left( \hat{\bm{x}}[k:k+h_{\rm P}|k], \bm{u}^{*}[k:k+h_{\rm P}] \right) \label{eq:kmpcstl1-cost}}
    {\label{eq:kmpcstl1}}
    {}
    \addConstraint{\left.\begin{aligned} 
    \bm{z}[\ell+1] &= \bm{A}\bm{z}[\ell]+\bm{B}_{0}\bm{u}^{*}[\ell]+\sum^{m}_{i=1}u^{*}_{i}[\ell]\bm{B}_{i}\bm{z}[k] \\
    \hat{\bm{x}}[\ell+1|k] &= \bm{C}\bm{z}[\ell+1] \quad
    (\ell = k, \ldots, k+h_{\rm P}-1)
    \end{aligned} \right\} \label{eq:kmpcstl1-predictor}}
    \addConstraint{\bm{z}[k]} {=\begin{bmatrix} 
    \bm{x}[k]^\top & f_{n+1}(\bm{x}[k]) & \cdots & f_{N}(\bm{x}[k]) \end{bmatrix}^\top \label{eq:kmpcstl1-initial-condition}}
    \addConstraint{|u^*_i[\ell]|} {\leq u_{i,\max} \quad (\ell=k,\ldots,k+h_{\rm P}) \label{eq:kmpcstl1-input-const}}
    \addConstraint{\left.\begin{aligned}
    & \rho^\varphi(M',\ell) \geq 0 \quad(\ell = k-h^{\varphi}+1,\ldots, k) \\ 
    & M' = M(\bm{x}[\ell:k],\bm{u}[\ell:k]) \cup \\
    & M^{\tighten}(\hat{\bm{x}}[k+1:\ell+h^\varphi|k],\bm{u}^*[k+1:\ell+h^\varphi]) \\
    \end{aligned} \right\} \label{eq:kmpcstl1-stl-past}}
    \addConstraint{\left.\begin{aligned}
    & \rho^\varphi(M',\ell) \geq 0 \quad(\ell = k+1,\ldots, k+h_{\rm S}) \\ 
    & M' = M^{\tighten}(\hat{\bm{x}}[\ell:\ell+h^\varphi|k],\bm{u}^*[\ell:\ell+h^\varphi]) \\
    \end{aligned} \right\}. \label{eq:kmpcstl1-stl-future}}
\end{mini!}
The symbol $h_{\rm P}$ is the prediction horizon for the state $\bm{x}$, and $J$ is the cost function. 
At time $k$, the decision variable $\bm{u}^{*}[k]$ is applied to the plant. 
The Koopman model \eqref{eq:bilinear-koopman-model} is introduced as the equality constraints \eqref{eq:kmpcstl1-predictor}. 
The initial value of $\bm{z}$ is set as \eqref{eq:kmpcstl1-initial-condition}. 
The input constraint is introduced as \eqref{eq:kmpcstl1-input-const}. 
Eqs.~\eqref{eq:kmpcstl1-cost}, \eqref{eq:kmpcstl1-predictor}, \eqref{eq:kmpcstl1-initial-condition}, \eqref{eq:kmpcstl1-input-const} are the same as those for the K-MPC \cite{KORDA2018149}. 
We impose the STL formula $\varphi$ on the optimization problem by the inequality constraints \eqref{eq:kmpcstl1-stl-past} and \eqref{eq:kmpcstl1-stl-future}. 
%As mentioned in Sec.~\ref{STL}, the STL formula $\varphi$ at time $k$ is evaluated on the time-sequence of states/inputs $(\bm{x}[k:k+h^\varphi],\bm{u}[k:k+h^\varphi])$. 
Eq.~\eqref{eq:kmpcstl1-stl-past} handles the sequence of past states/inputs and future states/inputs, and Eq.~\eqref{eq:kmpcstl1-stl-future} does only the sequence of future states/inputs. 
As a novel point of this letter, 
we formulate the STL constraints \eqref{eq:kmpcstl1-stl-past} and \eqref{eq:kmpcstl1-stl-future} from \textit{tightened} predicate functions. 
The \textit{tightened} predicate functions corresponding to the predicates $\{\pi_i\}_{i=1}^p$ can be introduced as 
\begin{equation}
    \label{eq:tightened-predicate}
    \mu_{i}^{\tighten}(\hat{\bm{x}}[\ell|k],\bm{u}[\ell]) := \mu_{i}(\hat{\bm{x}}[\ell|k],\bm{u}[\ell]) - \|\bm{G}_{\bm{x}}\| e_{\max}[\ell|k].
\end{equation}
The term $\|\bm{G}_{\bm{x}}\| e_{\max}[\ell|k]$ is the compensation for the maximum error associated with $\bm{G}_{\bm{x}} \hat{\bm{x}}[\ell|k]$ in terms of the predicate function $\mu_i$. 
We also introduce a set of the tightened predicate functions as $M^{\tighten}(\hat{\bm{x}}[k+1:k+h|k],\bm{u}[k+1:k+h]):=\{ \mu_{1}^{\tighten}(\hat{\bm{x}}[\ell|k],\bm{u}[\ell]), \ldots , \mu_{p}^{\tighten}(\hat{\bm{x}}[\ell|k],\bm{u}[\ell]) \}_{\ell=k+1}^{k+h}$ for time length $h$. 
Then, we formulate the \textit{tightened} robustness function $\rho^\varphi(M',\ell)$ by composing max/min functions with the tightened predicate functions in $M^{\tighten}$ as described in Sec.~\ref{STL}. 
%Since the reformulated robustness function is a combination of max/min functions with respect to predicate functions, it is monotonically non-decreasing with respect to predicate functions. 
%Thus, the inequality constraint for the robustness function reformulated from the tightened predicate functions is also \emph{tightened} {\color{red}(robustness is guaranteed)}. 
The inequality constraints for the tightened robustness function are \emph{tightened} compared to those for the robustness function from the original predicate functions.

Now, we prove that if \eqref{eq:kmpcstl1} is feasible, then the robust closed-loop performance for the specification $\varphi$ is guaranteed. 
The following lemma on the boundedness of the tightened predicate function \eqref{eq:tightened-predicate} is the first step to the main result. 

\begin{lemma} \label{thm:tightened-predicate-bound}
    The tightened predicate function \eqref{eq:tightened-predicate} serves as a lower bound for the original predicate function, i.e., for $i=1,\ldots,p$ and $\ell=k+1,k+2,\ldots,$ 
    \begin{equation}
        \label{eq:predicate-inequality}
        \mu_{i}^{\tighten}(\hat{\bm{x}}[\ell|k], \bm{u}[\ell]) \le \mu_{i}(\bm{x}[\ell], \bm{u}[\ell]),
    \end{equation}
    where the right-hand side corresponds to the predicate function evaluated at the true trajectory.
\end{lemma}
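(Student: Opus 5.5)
The argument is a direct computation from the affine form of the predicate functions, followed by an appeal to the error bound of Theorem~\ref{thm:error-bound}. Fix $i\in\{1,\ldots,p\}$ and $\ell\ge k+1$, and write $L=\ell-k\ge 1$. The same input value $\bm{u}[\ell]$ enters both sides of \eqref{eq:predicate-inequality}. Since $\mu_i$ is affine, the input term $\bm{G}_{\bm{u}}\bm{u}[\ell]$ and the constant $H$ cancel in the difference, and we get
\begin{equation*}
\mu_i(\hat{\bm{x}}[\ell|k],\bm{u}[\ell]) - \mu_i(\bm{x}[\ell],\bm{u}[\ell]) = \bm{G}_{\bm{x}}\bigl(\hat{\bm{x}}[\ell|k]-\bm{x}[\ell]\bigr) = \bm{G}_{\bm{x}}\,\bm{e}[k+L|k].
\end{equation*}
Here $\bm{G}_{\bm{x}}$ and $\bm{G}_{\bm{u}}$ denote the row vectors belonging to the $i$th predicate.

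The second step bounds this scalar. Treat $\bm{G}_{\bm{x}}$ as a $1\times n$ matrix and use the induced-norm inequality $|\bm{G}_{\bm{x}}\bm{e}|\le\|\bm{G}_{\bm{x}}\|\,\|\bm{e}\|$, which holds for the induced $1$-norm fixed in the Notation. This gives $\bm{G}_{\bm{x}}\bm{e}[k+L|k]\le \|\bm{G}_{\bm{x}}\|\,\|\bm{e}[k+L|k]\|$.

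Next, invoke Theorem~\ref{thm:error-bound}. The inputs produced by \eqref{eq:kmpcstl1} satisfy $|u_i|\le u_{i,\max}$ by \eqref{eq:kmpcstl1-input-const}, so the theorem applies. It gives $\|\bm{e}[k+L|k]\|\le e_{\max}[\ell|k]$ with probability $1-\delta$. Combining the three steps yields
\begin{equation*}
\mu_i(\hat{\bm{x}}[\ell|k],\bm{u}[\ell]) - \|\bm{G}_{\bm{x}}\|\,e_{\max}[\ell|k] \le \mu_i(\bm{x}[\ell],\bm{u}[\ell]),
\end{equation*}
and the left-hand side is exactly $\mu_i^{\tighten}(\hat{\bm{x}}[\ell|k],\bm{u}[\ell])$ by \eqref{eq:tightened-predicate}.

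The main point of care is not algebraic. First, the conclusion inherits the probabilistic qualifier of Theorem~\ref{thm:error-bound}, so \eqref{eq:predicate-inequality} holds with probability $1-\delta$, on the event where that theorem's bound is valid. Second, the prediction model in \eqref{eq:kmpcstl1-predictor} freezes the bilinear term at $\bm{z}[k]$. I would check that the error bound is used in a form that covers this frozen-state predictor. The resulting $e_{\max}$ already depends only on $\|\bm{z}[k]\|$, so I expect this to go through as stated.

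Finally, the norm of the row vector $\bm{G}_{\bm{x}}$ must be the one for which $|\bm{G}_{\bm{x}}\bm{e}|\le\|\bm{G}_{\bm{x}}\|\|\bm{e}\|_1$ holds. As an operator from $(\mathbb{R}^n,\|\cdot\|_1)$ to $\mathbb{R}$, its induced norm is the maximum absolute entry, and the inequality is then immediate. If instead the vector $1$-norm of $\bm{G}_{\bm{x}}$ is meant, it is no smaller than the maximum absolute entry. The tightening then only becomes more conservative and the inequality still holds.
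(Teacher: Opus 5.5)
Your proposal is correct and matches the paper's proof essentially step for step: the terms $\bm{G}_{\bm{u}}\bm{u}[\ell]+H$ cancel, the remaining term $\bm{G}_{\bm{x}}(\bm{x}[\ell]-\hat{\bm{x}}[\ell|k])$ is bounded below by $-\|\bm{G}_{\bm{x}}\|\,\|\bm{e}[\ell|k]\|$ via the induced norm, and \eqref{eq:multi-step-error} finishes the argument. Your side remarks are correct refinements that the paper leaves implicit, namely the inherited ``with probability $1-\delta$'' qualifier and the check of which norm on the row vector $\bm{G}_{\bm{x}}$ makes the inequality hold; likewise, the paper simply invokes \eqref{eq:multi-step-error} without separately addressing the frozen bilinear term.
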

\begin{proof}
    It holds $\mu_{i}(\bm{x}[\ell], \bm{u}[\ell]) - \mu_{i}^{\tighten}(\hat{\bm{x}}[\ell|k], \bm{u}[\ell]) 
    = \bm{G}_{\bm{x}} (\bm{x}[\ell] - \hat{\bm{x}}[\ell|k]) + \|\bm{G}_{\bm{x}}\| e_{\max}[\ell|k]$ from \eqref{eq:tightened-predicate}. 
    It is straightforward that $\bm{G}_{\bm{x}} (\bm{x}[\ell] - \hat{\bm{x}}[\ell|k]) \geq -\|\bm{G}_{\bm{x}}\| \|\bm{x}[\ell] - \hat{\bm{x}}[\ell|k]\| = -\|\bm{G}_{\bm{x}}\| \|\bm{e}[\ell|k]\|$ from the basic properties of the induced norms. 
    Since $e[\ell|k]$ is bounded by $e_{\max}[\ell|k]$ as \eqref{eq:multi-step-error}, it holds $-\|\bm{G}\| \|\bm{e}[\ell|k]\| + \|\bm{G}_{\bm{x}}\|e_{\max}[\ell|k] \geq 0$. 
    Thus, \eqref{eq:predicate-inequality} is valid.
\end{proof}
%As the second step, we guarantee that the tightened robustness function $\rho^\varphi(M',\ell)$, where $M'$ stands for the set of the predicate functions, is also tightened. 
The following lemma on the boundedness of the tightened robustness function $\rho^\varphi(M',\ell)$ in \eqref{eq:kmpcstl1-stl-past} and \eqref{eq:kmpcstl1-stl-future} is the second step. 
\begin{lemma}
    \label{lemma:2}
    The robustness function $\rho^\varphi(M',\ell)$ using the tightened predicate functions \eqref{eq:tightened-predicate} serves as a lower bound for the original robustness function, i.e., for $\ell=k+1,k+2,\ldots,$ 
    \begin{equation}
        \label{eq:robustness-inequality}
        \rho^{\varphi}(M',\ell) \leq \rho^{\varphi}(M(\bm{x}[\ell:\ell+h^\varphi],\bm{u}[\ell:\ell+h^\varphi]),\ell),
    \end{equation}
    where again the right-hand side corresponds to the robustness function evaluated at the true trajectory.
\end{lemma}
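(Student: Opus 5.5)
The plan is to exploit monotonicity of the robustness function in its predicate arguments and combine it with Lemma~\ref{thm:tightened-predicate-bound}. By the construction in Sec.~\ref{STL}, once negations have been pushed down to the predicates (absorbed by multiplying the corresponding $\bm{G}_{\bm{x}}$, $\bm{G}_{\bm{u}}$, $H$ by $-1$), $\rho^\varphi(\cdot,\ell)$ is a finite composition of $\max$ and $\min$ operations applied to the elements of the predicate-function set. Negated predicates are again affine predicates of the same form. Lemma~\ref{thm:tightened-predicate-bound} therefore applies to them as well, since its proof only uses $\|\bm{G}_{\bm{x}}\|=\|-\bm{G}_{\bm{x}}\|$.

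\textbf{Step 1 (monotonicity).} I will show that any function built from its arguments by finitely many $\max$ and $\min$ operations is componentwise non-decreasing. The argument is by structural induction on the STL formula $\varphi$. In the base case, a predicate $\pi_i$ has robustness equal to the predicate value at the relevant time, which is trivially monotone. For the inductive step, $\land$, $\lor$, $\mathbf{G}_{[a,b]}$, $\mathbf{F}_{[a,b]}$ and $\mathcal{U}_{[a,b]}$ are each expressed as $\min$/$\max$, or nested $\max\min$, of robustness functions of subformulae. Moreover, if $a_j\le b_j$ for all $j$, then $\min_j a_j\le\min_j b_j$ and $\max_j a_j\le\max_j b_j$. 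Hence, if every entry of one predicate-value set is no larger than the corresponding entry of another, the same ordering holds for $\rho^\varphi$.

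\textbf{Step 2 (entrywise comparison).} I will set up a one-to-one correspondence between the entries of $M'$ and those of $M(\bm{x}[\ell:\ell+h^\varphi],\bm{u}[\ell:\ell+h^\varphi])$, indexed by predicate $i$ and time $\ell'$.
\begin{itemize}
\item For entries coming from measured past data ($\ell'\le k$, which occur only in the constraint \eqref{eq:kmpcstl1-stl-past}), the two entries are identical.
\item For entries with $\ell'\ge k+1$, the entry of $M'$ is $\mu_i^{\tighten}(\hat{\bm{x}}[\ell'|k],\bm{u}[\ell'])$ and the true entry is $\mu_i(\bm{x}[\ell'],\bm{u}[\ell'])$. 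Here I take $\bm{u}[\ell']=\bm{u}^*[\ell']$, meaning the predicted trajectory is evaluated along the input sequence actually applied.
\end{itemize}
Lemma~\ref{thm:tightened-predicate-bound} then gives $\mu_i^{\tighten}(\hat{\bm{x}}[\ell'|k],\bm{u}[\ell'])\le\mu_i(\bm{x}[\ell'],\bm{u}[\ell'])$ for every such entry. Applying Step~1 yields \eqref{eq:robustness-inequality}.

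\textbf{Main obstacle.} The delicate point is bookkeeping rather than analysis. I need to make precise that $M'$ and the true set $M$ are indexed by the same predicates and times, so that $\rho^\varphi$ is the \emph{same} max/min composition evaluated on two ordered lists. I also need to track the caveat inherited from Theorem~\ref{thm:error-bound}: the bound holds with probability $1-\delta$, so \eqref{eq:robustness-inequality} inherits that probabilistic qualifier. I would handle this by stating explicitly that the inequality holds on the event where \eqref{eq:multi-step-error} holds for all relevant $\ell'$. The induction itself is routine.
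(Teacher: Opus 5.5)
Your proposal is correct and follows the same route as the paper: the robustness function is a max/min composition and hence non-decreasing in each predicate value, and you combine this entrywise with Lemma~\ref{thm:tightened-predicate-bound}. Your extra bookkeeping makes explicit several assumptions that the paper's two-line proof leaves implicit: negations are pushed down before tightening, the applied inputs are identified with $\bm{u}^*$, and the probability-$(1-\delta)$ qualifier is carried over from Theorem~\ref{thm:error-bound}.
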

\begin{proof}
    The robustness function is a non-decreasing function for the predicate function since the robustness function is composed of max/min functions. Therefore, \eqref{eq:robustness-inequality} follows from \eqref{eq:predicate-inequality} in Lemma~\ref{thm:tightened-predicate-bound}. 
\end{proof}
Finally, we state the main theorem showing that the proposed formulation \eqref{eq:kmpcstl1} guarantees robustness. 
\begin{theorem} \label{thm:robust-satisfaction} 
    Suppose that the solution of the RK-MPC optimization problem \eqref{eq:kmpcstl1} is located. 
    Then, the closed-loop response of the nonlinear plant \eqref{eq:discrete-nonlinear-system} satisfies the STL formula $\varphi$. 
\end{theorem}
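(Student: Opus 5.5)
The idea is to chain Lemma~\ref{thm:tightened-predicate-bound} and Lemma~\ref{lemma:2} with the structure of the constraints \eqref{eq:kmpcstl1-stl-past}, so that feasibility at every closed-loop step $k$ certifies $\rho^\varphi(M(\bm{x}[\ell:\ell+h^\varphi],\bm{u}[\ell:\ell+h^\varphi]),\ell)\ge 0$ on the \emph{true} trajectory for every $\ell$.

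Fix $\ell$ and consider the time instant $k=\ell+h^\varphi-1$, so that $\ell$ appears among the indices $k-h^\varphi+1,\ldots,k$ in \eqref{eq:kmpcstl1-stl-past}. Feasibility at that $k$ gives $\rho^\varphi(M',\ell)\ge 0$. Here $M'$ contains the true predicate values $\mu_i(\bm{x}[j],\bm{u}[j])$ for $j=\ell,\ldots,k$, all already realized. It also contains tightened predicates for the remaining index $j=k+1=\ell+h^\varphi$, evaluated at $\hat{\bm{x}}[j|k]$ and $\bm{u}^*[j]$. The input actually applied at $j$ is only $\bm{u}^*[j]$ computed at time $j$, so I would either note that $\bm{G}_{\bm{u}}$-terms require the input to coincide, or choose $k=\ell+h^\varphi$ so that $M'$ consists entirely of realized values. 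In the latter case, feasibility at time $k$ with $\ell=k-h^\varphi$ should directly yield the true robustness.

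If the formulation forces one predicted step, the argument proceeds as follows. Apply Lemma~\ref{thm:tightened-predicate-bound} elementwise: every tightened entry of $M'$ is bounded above by the corresponding true predicate value, and the untightened past entries coincide with the true ones. Since $\rho^\varphi$ is a composition of max/min and hence nondecreasing in each predicate argument (Lemma~\ref{lemma:2}), we get
\[
0\le\rho^\varphi(M',\ell)\le\rho^\varphi(M(\bm{x}[\ell:\ell+h^\varphi],\bm{u}[\ell:\ell+h^\varphi]),\ell).
\]
This holds with the probability $1-\delta$ inherited from Theorem~\ref{thm:error-bound}, or, if one insists on a union over steps, with the corresponding union bound. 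For initial $\ell$, \eqref{eq:kmpcstl1-stl-future} at $k$ with $\ell\in\{k+1,\ldots,k+h_{\rm S}\}$ gives the same conclusion by Lemma~\ref{lemma:2} directly, provided the applied inputs equal the planned ones.

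The main obstacle is the mismatch between planned inputs $\bm{u}^*$ from an earlier solve and the inputs actually applied under receding horizon. Lemma~\ref{thm:tightened-predicate-bound} compares predicates at the same $\bm{u}[\ell]$. In addition, Theorem~\ref{thm:error-bound} bounds the error along the input sequence used in prediction. I would resolve this by using only the final past-window constraint \eqref{eq:kmpcstl1-stl-past}, where the only predicted entry is one step ahead and its input is the one applied at the next step. I would also treat the error bound as valid for any admissible input sequence, since $e_{\max}$ depends only on $u_{i,\max}$ and $\bm{z}[k]$. With this, the chain of inequalities above closes the proof.
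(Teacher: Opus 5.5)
Your skeleton is the paper's: feasibility gives $\rho^\varphi(M',\ell)\ge 0$, Lemma~\ref{thm:tightened-predicate-bound} bounds each tightened entry by the true one, and monotonicity of max/min (Lemma~\ref{lemma:2}) transfers the sign to the true robustness. The paper's proof is exactly that one line. It never says which solve certifies which $\ell$. It tacitly treats the planned $\bm{u}^*$ as the executed input: Lemma~\ref{thm:tightened-predicate-bound} uses a single $\bm{u}[\ell]$ on both sides, and Lemma~\ref{lemma:2}, stated for $\ell=k+1,k+2,\ldots$, compares multi-step predictions along the planned sequence with the true trajectory.

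Your choice $k=\ell+h^\varphi-1$ is a real improvement on the state side. The only predicted state is $\hat{\bm{x}}[k+1|k]$, which depends only on $\bm{z}[k]$ and the applied $\bm{u}^*[k]$. For one step, \eqref{eq:kmpcstl1-predictor} coincides with \eqref{eq:bilinear-koopman-model}, so Theorem~\ref{thm:error-bound} with $L=1$ applies legitimately. Since this choice covers every $\ell$ from the initial time onward, you can also drop the appeal to \eqref{eq:kmpcstl1-stl-future}. That is precisely the route that needs the planned inputs to be executed.

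The step that fails is your claim that the input in that predicted entry ``is the one applied at the next step.'' For $\ell=k-h^\varphi+1$, the tightened entry in \eqref{eq:kmpcstl1-stl-past} is $\mu_i^{\tighten}(\hat{\bm{x}}[k+1|k],\bm{u}^*[k+1])$. Here $\bm{u}^*[k+1]$ is the second element of the plan computed at time $k$. The receding-horizon controller discards it and applies at $k+1$ the first element of a new solve.

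No later solve repairs this. At time $k+1$ the past window starts at $k+2-h^\varphi=\ell+1$, so $\ell$ is never re-certified with the realized $\bm{u}[\ell+h^\varphi]$. Hence Lemma~\ref{thm:tightened-predicate-bound} does not apply to that entry when $\bm{G}_{\bm{u}}\neq\bm{0}$.

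Your fallback $k=\ell+h^\varphi$ would remove the issue, since $M'$ would then contain only realized data. However, $\ell=k-h^\varphi$ lies outside the index range of \eqref{eq:kmpcstl1-stl-past}, so using it amounts to changing the formulation.

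As written, your argument is complete for state-only predicates ($\bm{G}_{\bm{u}}=\bm{0}$), which covers the specification \eqref{eq:stl-spec} used in the application. It is also complete under the same planned-equals-applied assumption the paper makes implicitly. You should state that restriction explicitly rather than claim the mismatch is resolved.
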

\begin{proof}
    The solution of \eqref{eq:kmpcstl1} forces $\rho^{\varphi}(M',\ell)\geq0$. Therefore, $\rho^{\varphi}(M(\bm{x}[\ell:\ell+h^\varphi],\bm{u}[\ell:\ell+h^\varphi]),\ell)\geq0$ follows from \eqref{eq:robustness-inequality} in Lemma~\ref{lemma:2}.
\end{proof}
\begin{remark}
% 結果の新規性とutility（どうsynthesisに繋がるのか）をクリアに説明する。
% 仮定の妥当性---解の存在--についてコメントする。仮定しっぱなしは良くない理論の典型...
    Theorem~\ref{thm:robust-satisfaction} guarantees that the proposed formulation \eqref{eq:kmpcstl1} is robust to the modeling error of the Koopman model while satisfying the given STL formula $\varphi$. 
    The novelty of Theorem~\ref{thm:robust-satisfaction} is to prove the robust satisfaction of the STL formula. 
\end{remark}
\begin{remark}
    The optimization problem \eqref{eq:kmpcstl1} is a mixed-integer problem because the max and min functions in the STL constraints \eqref{eq:kmpcstl1-stl-past} and \eqref{eq:kmpcstl1-stl-future} are expressed with binary variables \cite{raman2014}. 
    Furthermore, \eqref{eq:kmpcstl1} can be relaxed by rewriting hard constraints \eqref{eq:kmpcstl1-stl-past} and \eqref{eq:kmpcstl1-stl-future} into soft constraints if the solution of \eqref{eq:kmpcstl1} is not located \cite{sadraddini2015}. 
    % で解きやすくなると言いたい？
\end{remark}

%%%%%%%%%%%%%%%%%%%%%%%%%%%%%%%%%%%%%%%%%%%%%%%%%%%%%%%%%%%%%%%%
%%%%%%%%%%%%%%%%%%%%%%%%%%%%%%%%%%%%%%%%%%%%%%%%%%%%%%%%%%%%%%%%

\section{APPLICATION}  \label{application}
In this section, we apply the proposed formulation to conduct the RK-MPC of an AC-DC converter. 
The plant's setting is based on our previous report \cite{hirose-isie2026}, and the parameters used in this letter are mainly from \cite{modeling-and-impedance}, which presents an experiment of an electric power system for MEA.

\begin{figure}[t]
    \vspace{3.7pt}
    \centering
    \includegraphics[width=1.0\linewidth]{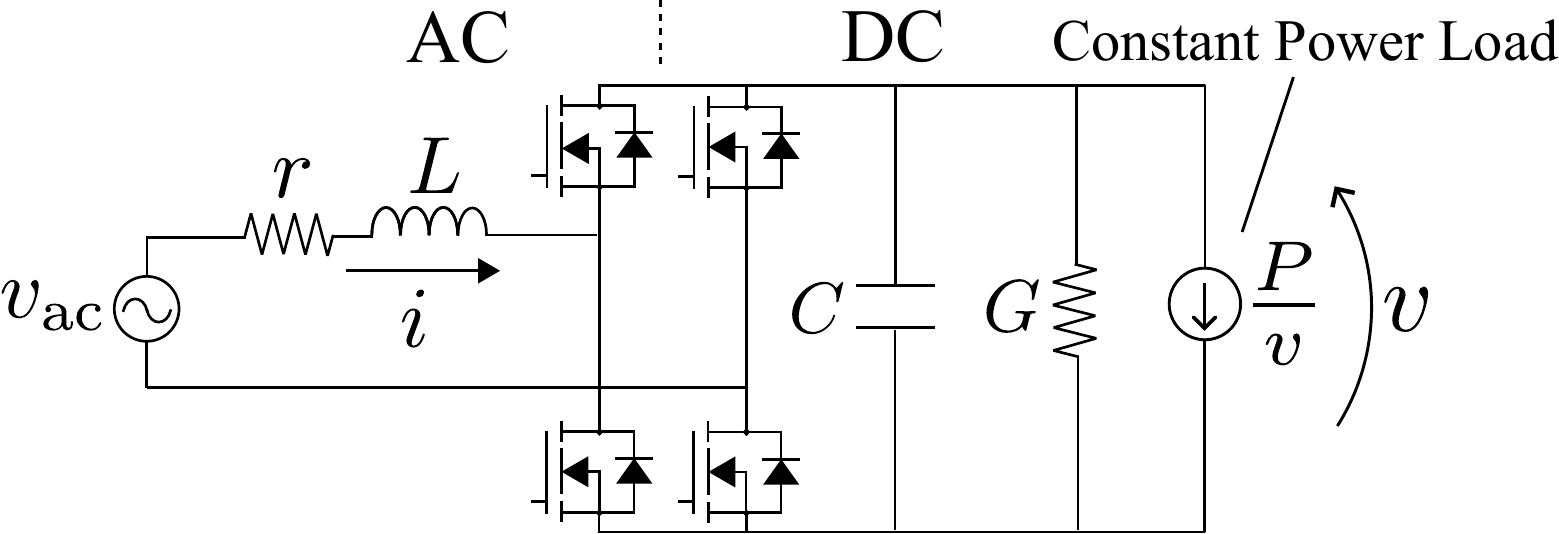}
    \caption{Basic configuration of an AC-DC converter. The Constant Power Load (CPL) is represented by the symbol of a current source.}
    \label{fig:AC-DC-converter}
\end{figure}

\subsection{AC-DC Converter} \label{AC-DC-Converter}

Fig.~\ref{fig:AC-DC-converter} shows a single-phase AC-DC converter containing a full-bridge boost rectifier. 
The state variables are defined as $\bm{x}(t)=[i(t),v(t)]^\top$, where $i(t)$ is the AC current and $v(t)$ is the DC voltage. 
The input variable is the duty ratio $s(t)\in[-1,1]$. 
The AC supply voltage $v_{\rm ac}(t)$ in continuous time $t$ is represented as $v_{\rm ac}(t) = E\sin(\omega t)$ with the amplitude $E=160/\sqrt{3}\,\rm V$ and the angular frequency $\omega=2\pi\cdot400\,\rm rad/s$ (period $T=2.5\,\rm ms$). 
The DC loads consist of a resistive load of $G=1/47\,\rm S$ and a Constant Power Load (CPL) with $P=1.5\,\rm kW$.  
The current of the CPL is represented as $i_{\rm CPL}(t) =P/v(t)$. 
The inductance is $L=20\,\mu\rm H$, the capacitance $C=1.2\,\rm mF$, and the resistance $r=0.2\,\Omega$. 
The dynamical model of the AC-DC converter is represented by the nonlinear control-affine system as follows:
\begin{equation}
    \label{eq:AC-DC-converter1}
    \left.
    \begin{alignedat}{2}
        -L\frac{\mathrm{d}i(t)}{\mathrm{d}t} &= s(t)v(t) + ri(t) - E\sin(\omega t) \\
        C\frac{\mathrm{d}v(t)}{\mathrm{d}t} &= s(t)i(t) - Gv(t) - \frac{P}{v(t)} 
    \end{alignedat}
    \right\}.
\end{equation}

The control objectives for the AC-DC converter are listed as follows: 1) The mean value of the DC voltage $v(t)$ is equal to the desired value $V_{\rm d}=270\,\rm V$; 2) The power factor of the converter is equal to one; precisely, $i(t)$ becomes $I_{\rm d}\sin\omega t$ with constant amplitude $I_{\rm d}$. 
Here, $I_{\rm d}$ is calculated as $I_{\rm d}=79.9\,\rm A$ from a power-balance equation between the AC source and the DC loads (see \cite{hirose-isie2026} for the detailed calculation).

We impose the following specifications to ensure reliable control of the AC-DC converter. 
First, the short time average of the DC voltage $v(t)$ over the AC period $T$ must be maintained at or above $250\,\rm V$. This is set in accordance with MIL-STD-704F \cite{MIL-STD-704F}. 
Second, any excursion of the AC current $i(t)$ above the rating of $82\,\rm A$ must last less than $5\,\rm ms$, namely two AC cycles. 
If $i(t)$ remains above the rated value for the pre-defined duration, the AC source will stop supplying power, which is generally called \textit{trip} of the AC source. 
A voltage sag on the DC side can cause this trip; the AC source temporarily increases its output current to keep the output power constant, resulting in a possible current excursion. 
Overall, the above specifications are to prevent the trip of the AC source while maintaining the DC voltage. 
The specifications will be described with STL in Sec.~\ref{setting-of-controller}. 
%In this letter, the following scenario will be considered: When restoring voltage sag, the AC source increases its output current thereby it exceeds the rated value temporarily. 
%Here, we specify that the rated current of the AC source is $82\,\rm A$ and the peak current is allowed to be maintained for 2 AC periods ($5\,\rm ms$). 
%%%%%%%%%%%%%%%%%%%%%%%%%%%%%%%%%%%%%%%%%%%%%%%%%%%%%%%%%%%%%%%%%%

\begin{figure*}
    \vspace{3.4pt}
    \begin{center}
    \begin{tabular}{cc}
        \begin{minipage}{0.485\linewidth}
            \begin{center}
                \includegraphics[width=0.985\linewidth]{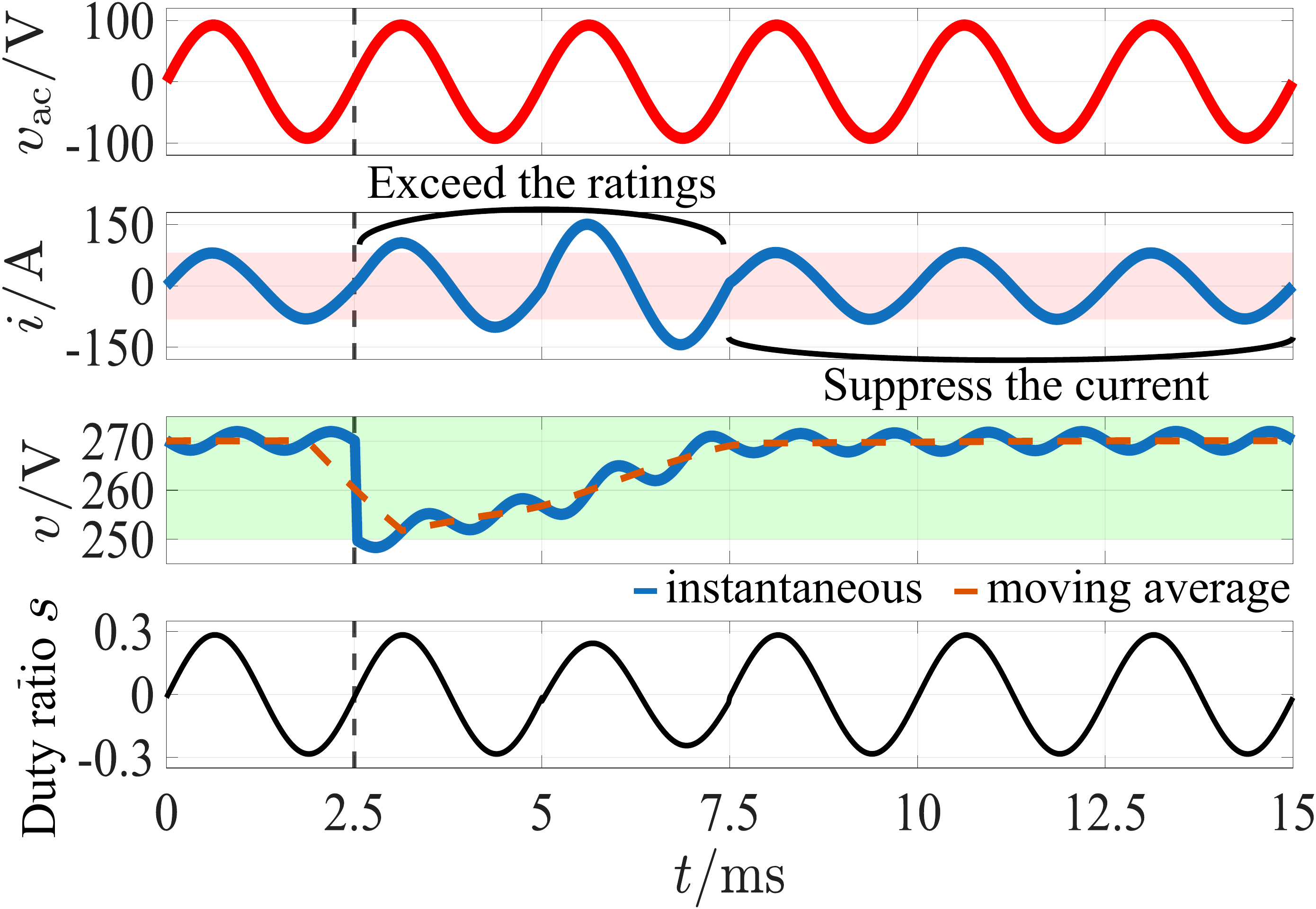}
                \subcaption{Formal Synthesis of Robust Koopman-MPC.}
            \end{center}
        \end{minipage} &
        \begin{minipage}{0.485\linewidth}
            \begin{center}
                \includegraphics[width=0.985\linewidth]{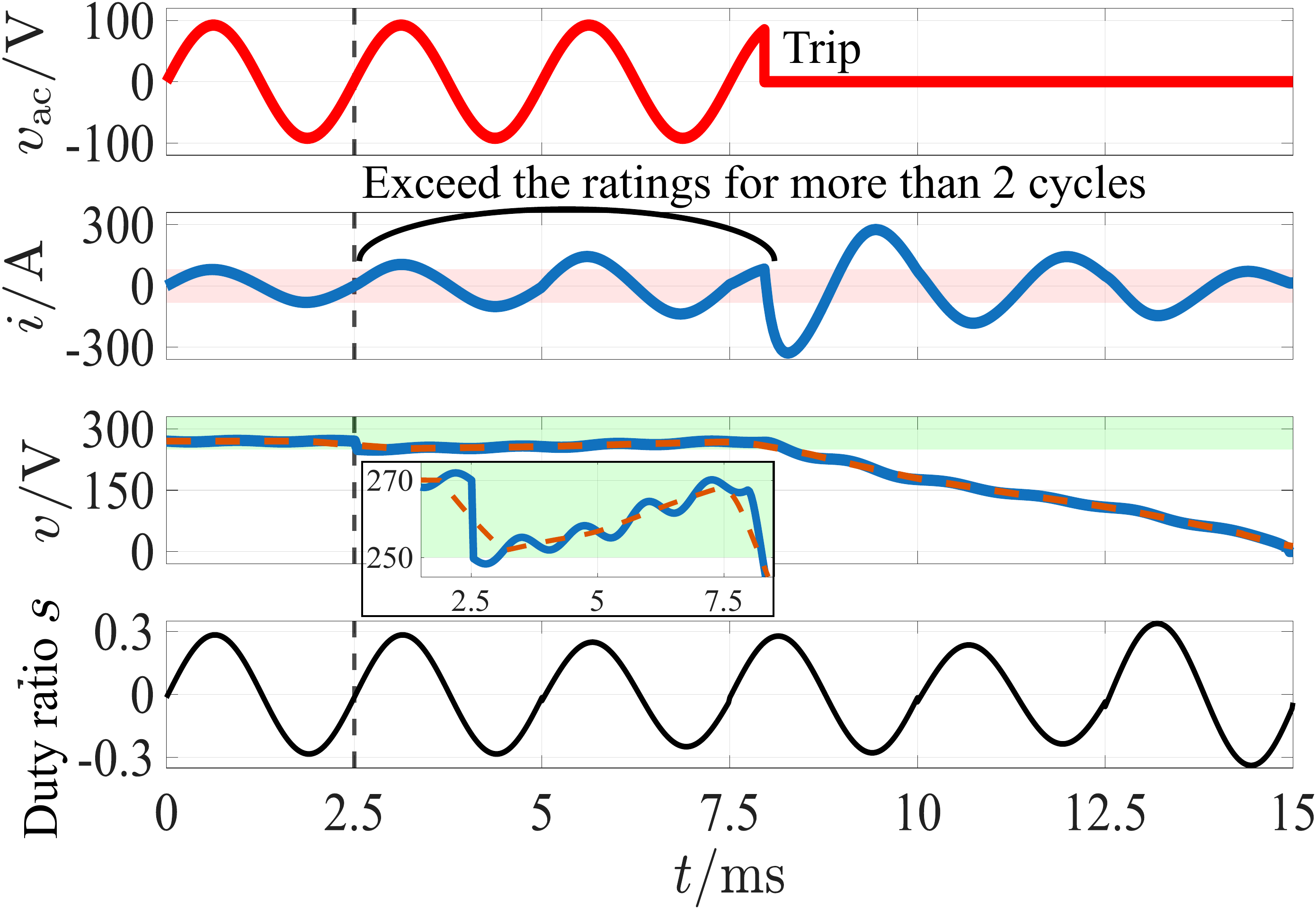}
                \subcaption{Koopman-MPC.}
            \end{center}
        \end{minipage}
    \end{tabular}
    \caption{Comparison of the control results by Formal Synthesis of Robust Koopman-MPC and Koopman-MPC. Graphs in row 1 show the AC source voltage, graphs in row 2 the AC current (\emph{blue} line) and the ratings $|i|\leq82\,\rm A$ (\emph{red} area), graphs in row 3 show the DC voltage (\emph{blue} line), its moving average (\emph{orange} line), and the range $v\geq250\,\rm V$ (\emph{green} area), and graphs in row 4 the duty ratio. The vertical \emph{dotted} line represents the moment when the voltage sag occurs. }
    \label{fig:control-result}
    \end{center}
\end{figure*}

%%%%%%%%%%%%%%%%%%%%%%%%%%%%%%%%%%%%%%%%%%%%%%%%%%%%%%%%%%%%
%%%%%%%%%%%%%%%%%%%%%%%%%%%%%%%%%%%%%%%%%%%%%%%%%%%%%%%%%%%%

\subsection{Koopman Modeling} \label{Koopman-model-for-ac-dc-converter}

First, we introduce a set of observables for the AC-DC converter from \cite{hirose-isie2026}. 
To address the control objectives of the AC-DC converter, the authors propose in \cite{hirose-isie2026} using a dynamical map, specifically Generalized State-Space Averaging (GSSA) \cite{gssa-introduction}, to synthesize observables. 
The index-$h\in\{0,1\}$ GSSA for state $x(t)$ is introduced as 
\begin{equation}
    \label{eq:GSSA-definition}
    \langle x \rangle_{h}(t) := \frac{1}{T} \int_{t-T}^{t} x(\tau)\exp({-\mathrm{j}\omega h\tau}){\rm d}\tau,
\end{equation}
where $T$ is the period of the AC voltage. %$\mathrm{j}=\sqrt{-1}$ stands for the imaginary unit.
That is, $\langle x \rangle_h(t)$ at time $t$ is the moving harmonic-average of the state $x(t)$ over $[t-T,t]$. %during the time between $t-T$ and $t$.
The $\langle i \rangle_{1}$ is related to the phasor of the AC current $i(t)$ and $\langle v \rangle_{0}$ is the mean value of $v(t)$. 
The reference signals of $i(t)$ and $v(t)$ are encoded with the GSSA as $\langle I_{\rm d}\sin(\omega t)\rangle_{1}(t)=-\mathrm{j}I_{\rm d}/2$ and $\langle V_{\rm d}\rangle_{0}(t)=V_{\rm d}$, which become time-invariant and will be used for the RK-MPC. 
In this letter, we compute \eqref{eq:GSSA-definition} at every AC period $T$ and label $\langle x \rangle_{h}(kT)$ as $\langle x \rangle_{h}[k]$ by an integer $k$. 
This integer $k$ is regarded as discrete time. 
Note that SafEDMD assumes that the equilibrium point of the plant is the origin \cite{SafEDMD}, but the target state of our converter system is not the origin because the reference signals of the GSSA variables $\langle i \rangle_1(t)$ and $\langle v \rangle_0(t)$ are non-zero. 
Thus, we take a new state $\bm{y}=[y_1,y_2,y_3]^\top:=[\Im\langle i \rangle_{1}-(-I_{\rm d}/2),\Re\langle i \rangle_{1},\langle v \rangle_{0}-V_{\rm d}]^\top$. 
When $\langle i \rangle_1[k]$ and $\langle v \rangle_0[k]$ reach their references in sufficient time, $\bm{y}[k]$ reaches the origin. 
By using the output $\bm{y}$, following \cite{hirose-isie2026}, we synthesize a set of observables as $\bm{z} := [y_1,y_2,y_3,1/(y_3+V_{\rm d})-1/V_{\rm d}]^\top$. 
The fourth element of $\bm{z}$ is motivated by the presence of the nonlinear term $P/v$ in \eqref{eq:AC-DC-converter1}.

Second, we introduce the input variable at discrete time $k$ determined by the RK-MPC controller. 
Following \cite{hirose-isie2026}, we represent the duty ratio as a sine wave, given by $s(t)=\tilde{u}_{1}\sin(\omega t) + \tilde{u}_{2}\cos(\omega t)$, 
where $\tilde{u}_{1},\tilde{u}_{2}$ are constants at steady states. 
On the basis of \cite{ijcta}, the steady-state values of $\tilde{u}_{1},\tilde{u}_{2}$ are calculated from the references of $\langle i \rangle_{1}$ and $\langle v \rangle_{0}$ as follows: $\bar{u}_{1} = -2(GV_{\rm d}+P/V_{\rm d})/I_{\rm d}$ and $\bar{u}_{2} = \omega LI_{\rm d}/V_{\rm d}$. 
We then introduce the control variable $\bm{u}=[u_1,u_2]^\top$ for the Koopman model \eqref{eq:bilinear-koopman-model} and set $\tilde{u}_1=\bar{u}_1+u_1$ and $\tilde{u}_2=\bar{u}_2+u_2$. 
In this letter, we regard $u_{1},u_{2}$ as the control variables that actuate the GSSA variables $\langle i \rangle_1 [k]$ and $\langle v \rangle_0 [k]$. 
That is, their values at discrete time $k$ (namely, $\bm{u}[k]=[u_{1}[k],u_{2}[k]]^\top$) regulate the values of the lifted variable $\bm{z}$ at time $k+1$ ($\bm{z}[k+1]$) as in \eqref{eq:bilinear-koopman-model}. 
This implies that $s(t)$ is updated at every period of the AC voltage. 

To obtain training datasets, we compute the state trajectory $\bm{x}(t)=[i(t),v(t)]^\top ~ (t\in[0,2T])$. 
Initial points $\bm{x}^{(d=1)}(t=0)=[i^{(d=1)}(t=0), v^{(d=1)}(t=0)]^\top, \ldots, \bm{x}^{(d=D)}(t=0)=[i^{(d=D)}(t=0), v^{(d=D)}(t=0)]^\top$ are generated with uniform random numbers, and one of three types of the control variable, $\bm{u}=[0,0]^\top,[0.01,0]^\top,[0,0.01]^\top$, is set  for each of the generated initial points. 
Here, $D$ is the total number of the generated trajectories, denoted as $\bm{x}^{(d=1)}(t)=[i^{(d=1)}(t),v^{(d=1)}(t)]^\top, \ldots, \bm{x}^{(d=D)}(t)=[i^{(d=D)}(t),v^{(d=D)}(t)]^\top ~ (t\in[0,2T])$ are computed. 
The training datasets of one-step maps are then collected as $\{(\bm{y}^{(d)}[k=1],\bm{u}^{(d)},\bm{y}^{(d)}[k=2])\}^{D}_{d=1}$ where $\bm{y}^{(d)}[k=1]$ and $\bm{y}^{(d)}[k=2]$ are computed via GSSA \eqref{eq:GSSA-definition} from the associated state trajectory $\bm{x}(t)~(t\in[0,2T])$. % by \eqref{eq:GSSA-definition}. 
Following \cite{SafEDMD}, we identify three Koopman models \eqref{eq:bilinear-koopman-model} from the datasets with $D=15,90,300$. 
%The error rates in predictions by these Koopman models are computed as $\mathrm{Error~rate} := \sqrt{ \sum_{i=1}^{3} \sum_{k=2}^{40} ( y_{i}[k] - \hat{y}_{i}[k|1] )^2} / \sqrt{ \sum_{i=1}^{3} \sum_{k=2}^{40} y_{i}[k]^2}$. 
%The error rates for the three Koopman models with $D=15,90,300$ are $9.6\times10^{-3}$, $7.4\times10^{-3}$, and $7.1\times10^{-3}$. 
These models have different degrees of modeling error and will be used to verify the robust performance of the proposed control. %in order to verify that our proposed method is robust to different degrees of modeling error. 

%%%%%%%%%%%%%%%%%%%%%%%%%%%%%%%%%%%%%%%%%%%%%%%%%%%%%%%%%%%%%%%%%%%%%%%%%%%%%%%%%%%%%%%%%%%%%%%%%%%%%%%%%%%%%%%%%%%%%%%%

\subsection{Setting of Controller Synthesis} \label{setting-of-controller}
We illustrate the setting of controller synthesis applied to the AC-DC converter. 
The cost function is set as $J(\hat{\bm{y}}[k:k+h_{\rm P}|k], \bm{u}^{*}[k:k+h_{\rm P}]) = \sum_{\ell=k}^{k+h_{\rm P}-1} \left\{ \hat{\bm{y}}[\ell+1|k]^{\top}\bm{Q}\hat{\bm{y}}[\ell+1|k] + \varDelta\bm{u}^{*}[\ell]^{\top} \bm{R} \varDelta\bm{u}^{*}[\ell]  \right\}$, 
% \begin{equation*}
%     \sum_{\ell=k}^{k+h_{\rm P}-1} 
%     \left\{ \hat{\bm{y}}[\ell+1|k]^{\top}\bm{Q}\hat{\bm{y}}[\ell+1|k] + \varDelta\bm{u}^{*}[\ell]^{\top} \bm{R} \varDelta\bm{u}^{*}[\ell]  \right\},
% \end{equation*}
where $\bm{Q}=\mathrm{diag}[0,1,5]$ is the state weight matrix, $\bm{R}=\mathrm{diag}[0.5,0.5]$ is the input weight matrix, and $\varDelta \bm{u}^*[\ell] = \bm{u}^*[\ell+1]-\bm{u}^*[\ell]$ is the increment of the input. 
The cost function aims to penalize the deviation of the GSSA variables $\bm{y}$ from the origin $\bm{0}$ and the increment of the input variables, latter of which is motivated by minimizing the fluctuation of the input time-series in steady state. 
We set $\bm{Q}=\mathrm{diag}([0,1,5])$ because focusing only on $\Re\langle i \rangle_{1}$ and $\langle v \rangle_{0}$ is sufficient to achieving the control objectives stated in Sec.~\ref{AC-DC-Converter}. 
Since $y_1$ includes $I_{\rm d}$, $y_1$ is associated with the amplitude of the AC current, and the control objective 2) is not concerned with the amplitude but with the phase of the AC current. 
Therefore, we focus on $y_2$ rather than $y_1$. 
The prediction horizon is set as $N_{\rm P}=5~(100\,\rm ms)$ based on MIL-STD-704F \cite{MIL-STD-704F}, where the DC voltage is stipulated to restore in the order of several hundreds of milliseconds. 
The input $\bm{u}$ is constrained in the range of $[-0.01,0.01]$. 
We represent the control specification described in Sec.~\ref{AC-DC-Converter} by the following STL formula:
\begin{equation}
    \label{eq:stl-spec}
    \begin{aligned}
        \varphi = 
        &\left(\langle v \rangle_{0}\geq 250\right) ~ \mathcal{U}_{[0,2]} ~ 
        \{ (\Im\langle i \rangle_{1}\geq -40.6) \\ 
        &\land (\Re\langle i \rangle_{1}\geq -5.8) \land (\Re\langle i \rangle_{1}\leq 5.8)\}.
    \end{aligned}
\end{equation}
The duration $[0,2]$ is described in discrete time with sampling period $T$. 
The specification \eqref{eq:stl-spec} is satisfied in $[k,k+2]$ for every discrete time $k=1,2,\ldots$ 
The predicate $\langle v \rangle_{0}\geq 250$ describes that the time averaging of $v$ is larger than and equal to $250\,\rm V$. 
The predicates $\{ (\Im\langle i \rangle_{1}\geq -40.6) \land (\Re\langle i \rangle_{1}\geq -5.8) \land (\Re\langle i \rangle_{1}\leq 5.8)\}$ are associated with the current limit. 
When the AC current takes the maximum amplitude and the minimum power factor $0.99$ i.e., $i(t)=(82\,{\rm A})\sin(\omega t+\theta)$ with $\cos\theta=0.99$, we have $\langle i \rangle_{1}=\pm 5.8 - \mathrm{j} 40.6$, from which the above predicates are derived.

%%%%%%%%%%%%%%%%%%%%%%%%%%%%%%%%%%%%%%%%%%%%%%%%%%%%%%%%%%%%%%%%%%%%%%%%%%%%%%%%%%%%%%%%%%%%%%%%%%%%%%%%%%%%%%%%%%%%%%%%

\subsection{Numerical Results} \label{numerical-simulation}

Fig.~\ref{fig:control-result} shows numerical results of the closed-loop performance of the AC-DC converter. 
Here, we use the Koopman model identified from $D=300$ datasets and set the tuning parameters $c_{\bm{z}}$ and $c_{\bm{u}}$ of $e_{\max}$ in \eqref{eq:multi-step-error} as $c=c_{\bm{z}}=c_{\bm{u}}=0.005$. 
For comparison, we implement the standard K-MPC controller by excluding the inequalities of the robustness function. 
Control simulations were conducted with numerical integration of \eqref{eq:AC-DC-converter1}. 
%The duty ratio $s(t)$ was implemented in continuous time through a zeroth-order hold with a sampling period $200\,\mu\rm s$. 
The optimization problem was solved by using the MIQP solver, Gurobi 13.0.1. 
The average computation time per iteration was $338\,\rm ms$ on a desktop with $2.4\,\rm GHz$ Intel Xeon. 
A voltage sag is modeled by subtracting $20\,\rm V$ from the DC voltage $v(t)$ at $t=2.5\,\rm ms$, denoted by the dotted vertical lines. 
Regarding control objective 1), both the controllers achieve small steady-state errors for the mean value of $v(t)$ in $[0,2.5]\,\rm ms$ before the sag. 
Regarding control objective 2), we compute the power factor during the steady-state $[0,2.5]\,\rm ms$ by using the Hilbert transform \cite{Hilbert-transform}. 
The value of power factor is more than $0.99$ for both the controllers, implying that they achieve the control objectives well before the sag. 
After the voltage sag, both the controllers maintain the DC voltage above $250\,\rm V$ and restore it to the reference value $270\,\rm V$. 
To restore the voltage, the AC source increases its output current $i(t)$. 
The current exceeds the rated value $82\,\rm A$ during the transient response for both the controllers. 
The transient responses differ between the two controllers. 
For the standard K-MPC controller in Fig.~\ref{fig:control-result}(b), the overcurrent persists for three cycles and finally leads to the trip of power supply, for which the AC voltage $v_{\rm ac}(t)$ is enforced to zero. 
Without power supply from the AC source, the DC load voltage $v(t)$ inevitably goes to $0\,\rm V$.
On the other hand, for the proposed RK-MPC controller in Fig.~\ref{fig:control-result}(a), the current returns below its rated value within the predefined duration of two cycles, and therefore no trip of power supply occurs. 
This comparison indicates that the proposed controller achieves reliable performance of the AC-DC converter, which is never achieved by the K-MPC controller.

We show the robustness test for the proposed formulation. 
As mentioned in Sec.~\ref{Koopman-model-for-ac-dc-converter}, we derived the three Koopman models with different degrees of modeling error. 
Here, we check if the proposed formulation is robust to varying degrees of modeling error, namely, using the datasets with different $D$ for the modeling. 
%We also check how the tuning parameters $c=c_{\bm{z}}=c_{\bm{u}}$ in $e_{\max}$ of \eqref{eq:multi-step-error} affect the control performance. 
Specifically, we test this by designing the controllers in Sec.~\ref{setting-of-controller} using the three Koopman models and applying them in the fault scenario described in the previous paragraph. 
TABLE~\ref{tab:evaluation} shows the test results. 
The controller with $c=0.005$, which we utilize in the previous paragraph, guarantees the robust satisfaction of the specification despite degrees of modeling error. 
Furthermore, we change the parameters $c=c_{\bm{z}}=c_{\bm{u}}$ of $e_{\max}$ in \eqref{eq:multi-step-error}, which are treated as tuning parameters in \cite{SafEDMD}, and check how these parameters affect the control performance. 
%The small $c$ assumes low degree of modeling error and, as a result, degrades the robustness of the proposed formulation. 
In the case of $c=0$, the maximum error $e_{\max}$ is assumed to be $e_{\max}=0$, thus the controller does not take into account the modeling error. 
Consequently, the controllers with any model fail to achieve the specification (violated). 
In the case of $c=0.003$, the maximum error $e_{\max}$ is assumed to be too small, and the controllers also fail to achieve the specification. 
On the other hand, all of the controllers with $c\geq0.01$ fail to locate a solution of \eqref{eq:kmpcstl1}. 
Thus, we show that the robustness and feasibility of the proposed formulation is affected by the tuning parameter $c$ and, in this case study, ensured as long as we choose the tuning parameter $c$ as $c=0.005$.

\begin{table}
    \vspace{3.4pt}
    \centering
    \caption{Robustness test of the proposed Robust Koopman-MPC controller.}
    \begin{tabular}{c||cccc}
        \makecell{Amount of data \\ for modeling} & $c=0$ & $c=0.003$ & $c=0.005$ & $c=0.01$ \\ \hline
        15 samples & Violated & Violated & Satisfied & Infeasible \\
        90 samples & Violated & Violated & Satisfied & Infeasible \\
        300 samples & Violated & Violated & Satisfied & Infeasible \\
    \end{tabular}
    \label{tab:evaluation}% c=0.15 all are infeasible
\end{table}
\begin{comment}
\begin{table}[t]
    \centering
    \caption{Robustness test of the proposed RK-MPC controller.}
    \begin{tabular}{c|ccc}
        \makecell{Amount of data \\ for modeling} & 15 samples & 90 samples & 300 samples \\ 
        \hline\hline
        Control result & Satisfied & Satisfied & Satisfied \\
    \end{tabular}
    \label{tab:evaluation}% c=0.15 all are infeasible
\end{table}
\end{comment}

%%%%%%%%%%%%%%%%%%%%%%%%%%%%%%%%%%%%%%%%%%%%%%%%%%%%%%%%%%%%%%%%%%%%%%%%%%%%%%%%%%%%%%%%%%%%%%%%%%%%%%%%%%%%%%%%%%%%%%%%

\section{CONCLUSIONS}

This letter proposed the formal synthesis of RK-MPC and applied it to the AC-DC power converter. 
Our proposed formulation handles the STL specification, including the description of temporal properties and theoretically guarantees the robust closed-loop performance in terms of satisfying the STL specification. 
We showed the effectiveness of formal synthesis of RK-MPC for the AC-DC converter by comparison with the standard K-MPC. 
Furthermore, we verified the robustness of the proposed formulation using the Koopman models with varying degrees of modeling error. 
Future work includes an experimental realization of the formal synthesis of RK-MPC for the AC-DC converter.

\addtolength{\textheight}{-12cm}   % This command serves to balance the column lengths
                                  % on the last page of the document manually. It shortens
                                  % the textheight of the last page by a suitable amount.
                                  % This command does not take effect until the next page
                                  % so it should come on the page before the last. Make
                                  % sure that you do not shorten the textheight too much.

%%%%%%%%%%%%%%%%%%%%%%%%%%%%%%%%%%%%%%%%%%%%%%%%%%%%%%%%%%%%%%%%%%%%%%%%%%%%%%%%

%%%%%%%%%%%%%%%%%%%%%%%%%%%%%%%%%%%%%%%%%%%%%%%%%%%%%%%%%%%%%%%%%%%%%%%%%%%%%%%%

%%%%%%%%%%%%%%%%%%%%%%%%%%%%%%%%%%%%%%%%%%%%%%%%%%%%%%%%%%%%%%%%%%%%%%%%%%%%%%%%
\begin{comment}
\section*{APPENDIX}

Appendixes should appear before the acknowledgment.
\end{comment}

%%%%%%%%%%%%%%%%%%%%%%%%%%%%%%%%%%%%%%%%%%%%%%%%%%%%%%%%%%%%%%%%%%%%%%%%%%%%%%%%

% \bibliographystyle{ieeetr}
% \bibliography{Sankou}

\end{document}